\title{StorySets: Ordering Curves and Dimensions \mbox{for Visualizing Uncertain Sets and Multi-Dimensional Discrete Data}}
\author{%
  \authororcid{Markus Wallinger}{0000-0002-2191-4413}\and
  \authororcid{Annika Bonerath}{0000-0002-8427-3246}\and
  \authororcid{Wouter Meulemans}{0000-0002-4978-3400}\and
  \authororcid{Martin Nöllenburg}{0000-0003-0454-3937}\and
  \authororcid{Stephen Kobourov}{0000-0002-0477-2724}\and
  \authororcid{Alexander Wolff}{0000-0001-5872-718X}
}
\abstract{%
We propose a method for visualizing uncertain set systems, which differs from previous set visualization approaches that are based on certainty (an element either belongs to a set or not). Our method is inspired by storyline visualizations and parallel coordinate plots: (a) each element is represented by a vertical glyph, subdivided into bins that represent different levels of uncertainty; (b) each set is represented by an x-monotone curve that traverses element glyphs through the bins representing the level of uncertainty of their membership. Our implementation also includes optimizations to reduce visual complexity captured by the number of turns for the set curves and the number of crossings. Although several of the natural underlying optimization problems are NP-hard in theory (e.g., optimal element order, optimal set order), in practice, we can compute near-optimal solutions with respect to curve crossings with the help of a new exact algorithm for optimally ordering set curves within each element's bins. With these optimizations, the proposed method makes it easy to see set containment (the smaller set's curve is strictly below the larger set's curve). A brief design-space exploration using uncertain set-membership data, as well as multi-dimensional discrete data, shows the flexibility of the proposed approach. 
}
\keywords{Set visualization, uncertainty, storyline visualization, parallel coordinate plots}
\newtheorem{theorem}{Theorem}
\begin{document}
\firstsection{Introduction}
\maketitle

Representing discrete, multi-dimensional relational data as set systems is a very versatile approach from both a modeling and a visualization perspective. 
A set system consists of a ground set of elements and a family of
different subsets of these elements.
A set system can also be seen as a hypergraph whose vertices
are the elements of the ground set and whose (hyper)edges are the subsets.
Hence, a set system where all subsets have cardinality~2 represents a
standard, undirected graph.
More generally, a set system may represent multi-dimensional data with
attributes.  For example, patients can be the elements, and clinical symptoms can be the sets, which are formed by groups of patients who show a specific
symptom. 

When analyzing such set-typed data, questions and tasks can be \emph{element-based},
e.g., ``Which symptoms does patient A show?'', or \emph{set-based},
e.g., ``What other symptoms do patients with symptom $S$ have?''
There are several different approaches for visualizing set systems; see Alsallakh et al.~\cite{alsallakh2016} or Hu et al.~\cite{hmnr-svds1-18}.
However, in many real-world examples, the membership of elements in
sets is not just a binary yes/no, but can be subject to uncertainties
of different types: from noisy data acquisition to data
transformations, and vagueness in human interpretation.
In the example above, the presence or absence of a symptom in a
patient 
is likely not always 0\% or 100\%.  
This turns the set system into an \emph{uncertain} (or \emph{fuzzy}) set system, where the membership of each element in a set can be quantified as a value $p$ between~0 (certainly not a set member) and~1 (certainly a set member).
Despite the abundance of such uncertain set-typed data in the real world, there is a surprising lack of research on visualizing uncertain set systems with a focus on representing the degree of uncertainty in the set membership information~\cite{TominskiBBFMMP23}. 
Possible reasons can be found both in the difficulty of quantifying and controlling such uncertainties and in the difficulty of visually representing uncertainty in a way that can be correctly perceived and interpreted by humans~\cite{jena2020uncertainty,bonneau2014overview}.

In this paper, we present a design space exploration for the representation of set data with uncertain set memberships, 
based on %
existing approaches for representing sets, graphs, and multi-dimensional data without uncertainties. 
From this exploration, we develop a novel %
visualization style, called StorySets, which combines aspects from storyline visualizations~\cite{tanahashi2012design} and parallel coordinate plots~\cite{HeinrichW13}.
In StorySets (see \cref{fig:teaser} for an example), sets are represented as x-monotone curves similar to the character curves in a storyline visualization or the subway lines along a track served by multiple lines. 
We represent the elements by a linear sequence of vertical glyphs
(for which we propose several variants), crossed by the curves
of the sets such that the vertical position at which a curve crosses
an element indicates the certainty of that element's membership in the
respective set.
Since the visual quality of a StorySets representation depends %
on the complexity of the curve trajectories for the different sets (e.g.,  how entangled they are, how many crossings they induce), we use the available degrees of freedom to optimize their appearance. 
To this end, (1)~we may rearrange the linear order in which the
different element glyphs are positioned and (2)~among the different
sets that contain a particular element with the same certainty, we may
locally rearrange the vertical order in which the set curves cross that
element's glyph.
We propose a 2-step algorithm that first finds a good element ordering using a traveling salesperson model and then finds an optimal set ordering that obtains the minimum number of crossings for the given element ordering. 
By iterating the two steps for a few iterations, we can compute
high-quality StorySets visualizations with few crossings and a few curve
turns in a relatively short time.  To support this claim, we perform a comprehensive computational
experiment where we compare the performance of different optimization variants on quality metrics and runtime.
Lastly, we present a case study with two real-world examples where we discuss how a StorySets representation can potentially help visualize and communicate uncertain set data and multi-dimensional discrete data.

\section{Related Work}
Even though set data naturally models many real-world problems and uncertainty is a fact of life, there is very little work on uncertainty in set visualization, as summarized in a recent paper proposing a conceptual framework on this topic~\cite{TominskiBBFMMP23}.
Publications related to the methods presented in this paper cover several different areas, but here we focus on three of particular interest: visual representation of sets, parallel coordinates, star plots and storylines, and visualizing uncertainty. 

\subsection{Set Visualization} Set visualization approaches such as Venn diagrams and Euler diagrams represent each set with a closed curve and elements with points inside or outside these curves; they date back hundreds of years. 
Methods for automatically generating Euler diagrams represent the sets with circles and ellipses~\cite{sfrh-adedwc-12,w-eaacved-12,larsson2020eulerr,mr-efled-14}, or as well with less regular shapes~\cite{rd-ued-10,saa-favos-09,sas-sabied-16,srhz-iged-11}. 
While Euler and Venn diagrams are intuitive, they do not scale well for large numbers of sets or elements~\cite{alsallakh2016}. 

In overlay-based set visualizations, elements are placed at pre-defined positions in the plane, and sets are represented by regions.
Bubble Sets~\cite{cpc-bsrrwioev-09} create isocontours for each set, which enclose their corresponding points. LineSets~\cite{ahrc-dslnvt-11} connect the elements of each set by a smooth and short curve reminiscent of lines in metro maps.
MetroSets~\cite{jwkn-mvsmm-20}, computes a path support graph from an abstract set system and visualizes it in the style of a metro map, visually similar to LineSets.

Matrix-based visualizations map sets to rows and elements to columns and indicates containment with symbols or glyphs. Several interactive systems are based on this idea: ConSet~\cite{kls-vcwpmd-07}, OnSet~\cite{smds-ovtlbd-14}, UpSet~\cite{lex2014upset}.
Similarly, in linear
diagrams~\cite{rsc-vswld-TCHI15,lm-clmdv-19,stapleton2019efficacy,dn-cold-22, wallinger23,RodgersCBNWD24}
each set is a row in the matrix indicated by one or more horizontal
line segments.
The utility of matrix-based representations in general and linear diagrams, in particular, depends on carefully ordering rows and columns of the matrix.

Potentially, some of the above-mentioned approaches could be adapted to uncertain set data. For example, matrix-based approaches could indicate the respective uncertainty by color or size of glyphs~\cite{TominskiBBFMMP23}. However, adapting the visual representation is not always straightforward. While StorySets is able to handle certain set data as well, the focus is clearly on explicitly representing uncertain membership, where bins represent uncertainty classes that attempt to not over-complicate the representation.

\subsection{Parallel Coordinates, Star Plots and Storylines}

Parallel coordinate plots visualize high dimensional data by generalizing 2D and 3D Cartesian plots; see the state-of-the-art report~\cite{HeinrichW13}. Multiple axes (one for each dimension) are represented by parallel vertical lines, and each data point as an $x$-monotone curve intersecting the axes at the corresponding $y$ values. 

Parallel coordinates were initially designed to represent continuous data~\cite{inselberg1990parallel}, but were soon applied to discrete and categorical data~\cite{rosario2004mapping}.
ParallelSets~\cite{BendixKH05} adopts the layout
of parallel coordinates but substitutes the individual data points by
a frequency-based representation.
As with matrix-based set visualization, the order of the axes in a parallel-coordinates plot has an 
impact on the resulting visualization.  Different metrics to evaluate the order of axes have been proposed, based on correlations~\cite{ferdosi2011visualizing}, dimension clustering~\cite{ankerst1998similarity}, data clustering~\cite{tatu2010automated}, outlier detection~\cite{wilkinson2006high}, etc. As many of the corresponding underlying optimization problems are NP-hard, new heuristics and approaches, including machine learning, continue to be proposed~\cite{LHH12,LHZ16}.

In star plots, the axes of a parallel coordinates plot become the spokes of a wheel, and each data point becomes a closed curve, or a ``star"~\cite{chambers1983graphical}. To deal with more data points and more dimensions, variants such as DataRoses have been proposed that combine interaction and summarization~\cite{elmqvist2008datameadow}.

Dating back to 2009's XKCD comic 657 ``Movie Narrative Charts,''
storyline visualizations have been used to show a timeline 
of interactions between a set of characters, represented by a set of curves. 
The design space of the storylines visualization metaphor has been
explored~\cite{tanahashi2012design}, and ways to improve readability
have been proposed~\cite{liu2013storyflow}.  Storylines have also
motivated algorithmic work about line crossings and curve
complexity~\cite{pupyrev11,fpw-omlbbc-JGAA15,knpss-mcsv-GD15,gjlm-cmsv-GD16,dfflmrsw-bcsv-JGAA17,dlmw-csfbc-GD17}.

One issue of parallel coordinate plots is overplotting. This is especially obvious when discrete data is mapped to axes. This has been investigated in the related work, for example, by filtering or aggregating data, spatial distortion, or reordering of axes. StorySets is similar to parallel coordinate plots in the sense that sets are represented as x-monotone curves that pass through bins in glyphs representing set elements and their respective certainty. StorySets also optimizes element order, which is similar to reordering axes. Additionally, by binning, we introduce a degree of freedom that allows us to reorder curves. %
This makes it possible to avoid some curve crossings %
and to introduce spaces between curves. To our knowledge, such an approach has not been previously investigated.    

\subsection{Uncertainty Visualization}\label{sec:uncertainty_related}
There are several surveys on uncertainty visualization~\cite{bonneau2014overview,hullman2018pursuit} and the authors of the most recent one~\cite{jena2020uncertainty} point out a ``gulf between the rhetoric in visualization research about the significance of uncertainty and the inclusion of representations of uncertainty in visualizations used in practice.'' 
While there are already hundreds of papers about uncertainty in general, there are very few that even consider uncertainty in set visualization in particular~\cite{TominskiBBFMMP23}.
Fuzzy set visualization, using opacity-varying freeform Euler-like diagrams~\cite{ZhuXLS18}, represents membership uncertainty with distance from the region's center and does not generalize to overlapping sets. A graph-based model for visualizing fuzzy overlapping communities in networks~\cite{VehlowRW13} assumes that each element belongs to multiple communities, and the sum of memberships must add to 1. 
 In the context of the set system, uncertainty can affect the elements as well as the sets. It can arise for several of the usual reasons: ambiguity, limited or missing data, noisy signals, or human impreciseness.
In this paper, we focus on uncertainty in element-set membership. To the best of our knowledge, our proposed method, StorySets, is the first attempt to visualize uncertain set systems.

\section{Design Space Exploration}

\begin{figure}[t]
  \begin{subfigure}[b]{0.5\linewidth}
    \centering
    \includegraphics[width=\textwidth, page=1]{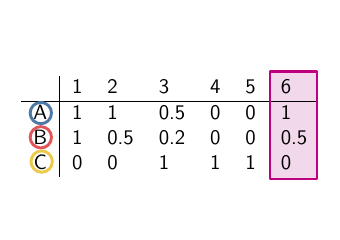}      
    \caption{matrix based}
    \label{fig:designspace:matrix}
  \end{subfigure}
  \hfill
  \begin{subfigure}[b]{0.5\linewidth}
    \centering
    \includegraphics[width=\textwidth, page=2]{figures/designexploration.pdf}
    \caption{Euler diagram}
    \label{fig:designspace:euler}
  \end{subfigure}

  \begin{subfigure}[b]{0.5\linewidth}
    \centering
    \includegraphics[width=\textwidth, page=3]{figures/designexploration.pdf}      
    \caption{LineSet}
    \label{fig:designspace:lineset}
  \end{subfigure}
  \hfill
  \begin{subfigure}[b]{0.5\linewidth}
    \centering
    \includegraphics[width=\textwidth, page=4]{figures/designexploration.pdf}      
    \caption{Linear Diagram}
    \label{fig:designspace:lineardiag}
  \end{subfigure}

  \begin{subfigure}[b]{0.5\linewidth}
    \centering
    \includegraphics[width=\textwidth, page=5]{figures/designexploration.pdf}      
    \caption{Star plot}
    \label{fig:designspace:starplot}
  \end{subfigure}
  \hfill
  \begin{subfigure}[b]{0.5\linewidth}
    \centering
    \includegraphics[width=\textwidth, page=6]{figures/designexploration.pdf}      
    \caption{StorySet}
    \label{fig:designspace:storyset}
  \end{subfigure}
    
  \caption{(a) The input: the (un)certainty of an element being a member of a set; (b--f) different visual representations of an uncertain set system
    with three sets $\{A, B, C\}$ and six elements $\{1, 2, 3, 4, 5, 6\}$.} 
  \label{fig:designspace}
\end{figure}

We begin with a high-level overview of set visualization techniques and a discussion about how they could be extended to express uncertainty. Then, we discuss the StorySets design space in more detail.

\begin{itemize}
    \item \textbf{Matrix-based} approaches are well-established techniques for visualizing set systems.  Each row corresponds to a set and each column to an element; see \cref{fig:designspace:matrix}. We suggest that the matrix is filled with the values of the set membership certainties. A similar variant was proposed by Tominski et al.~\cite{TominskiBBFMMP23} where they place glyphs in the matrix that encodes the uncertainty with size and/or color.         
    \item \textbf{Euler Diagrams} are another popular visualization technique for set systems. We suggest representing each element as a unit-sized disk (instead of a point) and indicating how certain the membership of an element in a set is by the element's distance to the set boundary or with partial containment of the element in the set's region; see \cref{fig:designspace:euler}. Zhu et al.~\cite{ZhuXLS18} use a similar idea for a single set. 
    A similar strategy could show uncertainty in LineSets; see \cref{fig:designspace:lineset}. 
    \item \textbf{Linear diagrams} are similar to matrix-based approaches where each set represents a row in a matrix and each element represents a column in a matrix. Membership of elements in sets is indicated by one or more horizontal line segments that pass through all contained elements~\cite{rsc-vswld-TCHI15}. We suggest to encode the set membership uncertainty by the line width; see   \cref{fig:designspace:lineardiag}.
    \item \textbf{Star plots}~\cite{chambers1983graphical} represent
    elements are partial spokes of a wheel and sets are closed curves that either go through or avoid the spokes. 
We suggest encoding uncertainty by the distance of a curve from the center of the diagram; see~\cref{fig:designspace:starplot}.
    \item \textbf{Storylines} are the last concept we want to highlight for the visualization of uncertain sets; see \cref{fig:designspace:storyset}. In a Storyline~\cite{knpss-mcsv-GD15,dfflmrsw-bcsv-JGAA17} visualization, each set corresponds to an x-monotone curve, and glyphs represent elements. In our adaptation, we would place said glyphs next to each other and anchor them at a top baseline. Each glyph encodes the certainty levels, e.g., by a set of stacked boxes where the box width corresponds to the certainty level. The curve of a set passes horizontally through the glyph at the certainty level that corresponds to the set membership certainty.  
\end{itemize}
While all of the mentioned visualizations of uncertain sets have their advantages and disadvantages, we aim for a visualization that provides interpretable set membership uncertainty and shows set containment. It is known that Euler diagrams do not scale well with many sets or elements,  even without uncertainties~\cite{alsallakh2016}. Matrix-based approaches and linear diagrams make it difficult to perceive set containment~\cite{alsallakh2016}. Hence, we deem that the adaptions of star plots and storylines best suit these tasks. Since they are computationally and visually similar, we summarize both under the term StorySets: on the one hand, with a star layout and on the other hand, with a storyline layout. In the following, we discuss the design space of StorySets.

\begin{figure}
    \centering    
    \begin{subfigure}{0.45 \linewidth}        
        \includegraphics[page=4, width=\textwidth]{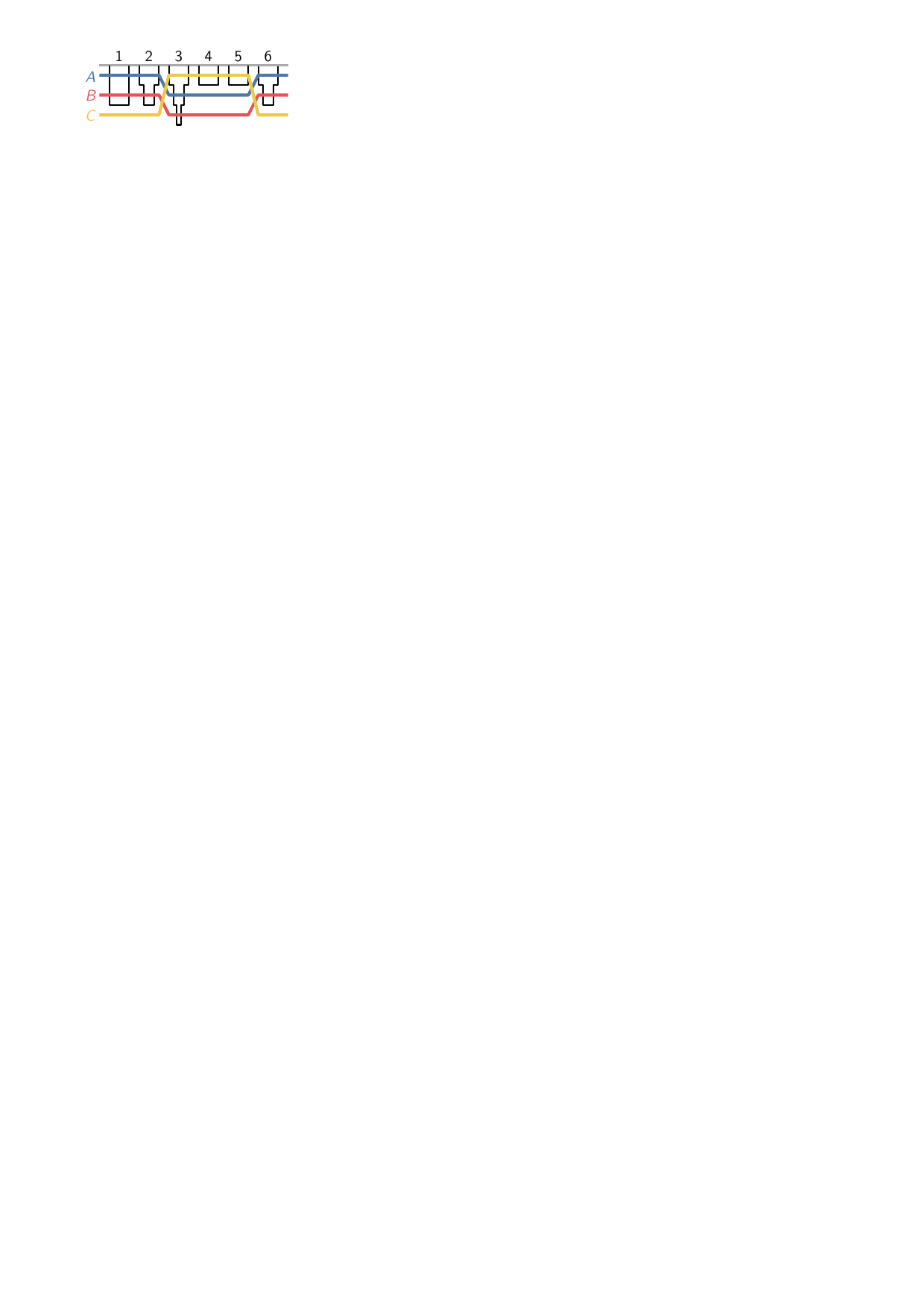}
        \caption{glyph rendering: color gradient}
        \label{fig:variants:glyphs3}
    \end{subfigure} 
    
    \begin{subfigure}{0.45 \linewidth}        
        \includegraphics[page=5, width=\textwidth]{figures/variants.pdf}  
        \caption{bin heights: uniform}
        \label{fig:variants:uniformbinheights}
    \end{subfigure} 
    \begin{subfigure}{0.45 \linewidth}        
        \includegraphics[page=6, width=\textwidth]{figures/variants.pdf}  
        \caption{bin heights: given distribution}
        \label{fig:variants:givenbinheights}
    \end{subfigure} 
    \begin{subfigure}{0.45 \linewidth}        
        \includegraphics[page=7, width=\textwidth]{figures/variants.pdf}  
        \caption{rounded curves}
        \label{fig:variants:roundedlines}
    \end{subfigure} 
    \begin{subfigure}{0.45 \linewidth}        
        \includegraphics[page=8, width=\textwidth]{figures/variants.pdf}  
        \caption{horizontal compactness}
        \label{fig:variants:verticalcompactness}
    \end{subfigure} 
    \caption{Different design variants of StorySets.}
    \label{fig:variants}
\end{figure}

\paragraph{Storyline and Star Layout}
As a first design dimension, we want to compare the storyline and star layouts; see \cref{fig:designspace:starplot} and \cref{fig:designspace:storyset}. With a storyline layout, we obtain a visualization that is more similar to Storylines and linear diagrams while the star layout yields a star plot visualization. From a computational point of view, both variants are similar, with the exception that for the star layout, one needs to take care of the additional cyclic constraints between the last and the first element. A storyline layout can handle many elements and sets, but the star layout has a better aspect ratio and is more compact.
By varying the size of the inner `free' circle for the star layout, we can fit more elements.  

\paragraph{Glyph Rendering}
The rendering of the element symbols is crucial for perceiving set membership certainties. We focus on two visual encodings: (i) the symbols consist of a set of stacked boxes where the widths of the boxes encode the certainty level; see \cref{fig:designspace:storyset}, and (ii) the symbol is a set of boxes of the same width with different colors that encode the certainty level; see \cref{fig:variants:glyphs3}. We are not aware of relevant perception studies that might guide these (somewhat specific) choices.

\paragraph{Box Heights}
We consider three options for choosing the heights of the uncertainty boxes. A straightforward approach is to have boxes of uniform height; see \cref{fig:variants:uniformbinheights}. This would make it easy to compare the certainty levels of different elements since a change in the y-value of a curve directly corresponds to a change in the certainty level. A downside is that this can lead to large symbols. 
A second option is to scale each box of each element according to the number of curves going through it; see \cref{fig:designspace:storyset}. This makes the symbol sizes smaller but loses the direct correspondence of uncertainty and y-value. This approach highlights the number of uncertain set containments of an element (its symbols' height).
The third option is to scale the boxes of an element according to a given distribution; see \cref{fig:variants:givenbinheights}. For example, if the displayed curves are just a selection of a larger number of sets, one can scale the boxes according to the overall number of sets that contain the element. This variant is closely related to stacked bar charts. It has the advantage that the visualization encodes even more information than just the uncertain set containments of the displayed curves. 

\paragraph{Curve Smoothness}
Another visual parameter of our StorySets is the smoothness of the curves. In a straightforward implementation, one would consider each curve as a series of straight-line segments that have their bends in the symbols; see \cref{fig:designspace:storyset}. Curves that have smooth bends (e.g., cubic splines) are more appealing but might introduce unnecessary crossings. We opt for a smooth poly-line variant (rounded curves) that avoids this problem; see \cref{fig:variants:roundedlines}.

\paragraph{Compactness}
Since for large set systems, the visualization can become unwieldy, we suggest making the storyline layout more compact by only showing the relevant piece of every set curve. 
For example, each curve can start just before its respective first element in the given element order and can end just after its respective last element; see \cref{fig:variants:verticalcompactness}.

\subsection{Design Requirements}

Inspired by the set data abstract task taxonomy of Alsallakh et al.~\cite{alsallakh2016}, the discussion in Tominski et
al.~\cite{TominskiBBFMMP23}, and our own design space exploration we propose the following design requirements for StorySets:

\begin{enumerate}[leftmargin=*, label=(R\arabic*)]
    \item Connect an element to every set in which it is
      (uncertainly) contained.
    \item Discern between levels of membership certainty.
    \item Avoid overplotting and clutter.
\end{enumerate}

R1 requires that set curves must pass through all elements that are (uncertainly) contained. This explicitly shows set membership in the visual representation, thus, allowing for typical abstract set- and element-based tasks. R2 focuses on the uncertainty aspect. Different levels of uncertainty should be expressed and discernible. Furthermore, this aspect should take less visual focus than the membership representation. R3 captures a common goal for visualization optimization.

\section{Combinatorial Problem and Algorithm} 

Our proposed approach consists of two independent steps.
In Step~I, we order the elements of the ground set.
This can be captured as an instance of the traveling salesperson (TSP) problem,
which is a classic \NP-hard~\cite{HeldK70} optimization problem. 
We present three variants that use different strategies to assign edge
weights in the TSP instance.  The aim of these strategies is to reduce
the number of crossings between curves.
In Step~II, we order the set curves such that curve crossings are minimized. 
Here, we assume that the order of set elements is already fixed.
Note that the two steps can be executed independently and
alternately. Overall, our aim is to reduce the visual clutter, which is in line with our design requirement (R3).

Apart from the number of crossings, the amount of \emph{wiggle} has
also been used as a metric for storyline-type visualizations.  Wiggle
measures the vertical distance that a set curve travels; the total wiggle of
a StorySet visualization is the sum over the wiggle of the respective
set curves.  In the combinatorial problem that we are solving here,
only the vertical ordering of the set curves is important. 
Hence, the vertical distance we can measure here for a curve is the difference in positions in the vertical order of two consecutive set elements.
Similarly, in the basic variant of StorySets (see
\cref{fig:variants:glyphs3,fig:variants:roundedlines}), the distance
between consecutive set curves is fixed.  In such situations, the
amount of ``combinatorial'' wiggle is proportional to the number of
crossings.  This can be seen as follows.  In each crossing, two set
curves change their relative position; one goes up one unit, and one
goes down one unit.  Now it is clear that twice the number of
crossings exactly equals the total amount of combinatorial wiggle (in
terms of the standard distance between consecutive set curves).
Therefore, it suffices to take into account and to minimze the number
of crossings only. 

\subsection{Definitions and Notation}

A set system $S = (U, E)$ consists of a ground set $U=\{u_1,
\dots, u_n\}$ of elements and a family $E = \{s_1,
\dots, s_m\}$ of subsets of these elements, that is, for $i\in\{1,\dots,m\}$, it holds that $s_i \subseteq U$.
In an \emph{uncertain set system}~$S$, we additionally have a set of
uncertainty levels $L \subseteq [0,1]$ and a mapping $\beta \colon E
\times U \rightarrow L$, where each element $u \in U$ is contained in
each set $s \in E$ with (un-)certainty $\beta(s,u)$.
If $\beta(s,u)=0$, we know that $u \not\in s$, and if
$\beta(s,u)=1$, we know that $u \in s$. 
For values strictly between $0$ and $1$, there is uncertainty as to
whether $u \in s$ or not.  In other words, we can interpret
$\beta(s,u)$ as the probability that $u \in s$.
Such uncertain set systems are also known as \emph{fuzzy} set systems.
We are primarily interested in a discrete set of uncertainty levels $L
= \{l_1, \dots, l_k\}$, where $0=l_1 < l_2 < \dots < l_{k-1} < l_k=1$
for small values of $k$.  For example, in natural language, the set
$L=\{0, 0.25, 0.5, 0.75, 1\}$ roughly corresponds to the levels ``not
contained'', ``rather not contained'', ``maybe contained'', ``rather
contained'', ``contained''.

In addition to the uncertain set membership introduced above, we can
also define uncertain containment between two sets, based on the
uncertain set memberships given by $\beta$.
Let $s$ and $s'$ be two sets in $E$.  Then we say that $s$ is an
\emph{uncertain subset} of $s'$ if $\beta(s,u) \le \beta(s',u)$ for
every element $u \in U$.  Note, however, that uncertain containment
does not guarantee actual containment.

We encode an uncertain set system as an $m \times n$ matrix $B = (b_{i,j})$
where the sets are represented in the rows and the elements in the columns.
For $1 \le i \le m$ and $1 \le j \le n$,
$b_{i,j}=p$ means that $\beta(s_i,u_j)=l_p$.  In other words,
$s_i$ is in bin~$p$ of element~$j$.  Given this encoding of the set
system, we represent the elements by their index set
$[n]:=\{1,2,\dots,n\}$, the family of subsets by~$[m]$, and
the set of uncertainty levels by~$[k]$.

\subsection{Step~I: Ordering the Elements (or Dimensions)}
\label{sec:ordering_dimensions}

We model the problem of finding an order of the elements as a TSP problem.
This approach is commonly applied in similar contexts such as parallel coordinate plots~\cite{HeinrichW13} or linear diagrams~\cite{dobler2023crossing}. 
By encoding the similarity of the elements as edge weights between the
corresponding vertices of a complete graph, a (near-) optimal tour
corresponds to a (near-) optimal order of the elements. 

Let $B^\prime=(B,0^m)$ be the matrix that we get if we append a
column of zeros to the right of~$B$.  This all-zero column represents
a dummy vertex that determines the start and end of the order.
(Modelling the problem without the dummy vertex corresponds
to finding a solution for the star plot variant.)
Next, we construct the complete graph $G(B^\prime)$ whose vertices
$v_1,v_2,\dots,v_{n+1}$ correspond to the elements of~$U$ (including
the dummy vertex) or, equivalently, to the columns
of~$B^\prime$.  Additionally, we compute a weight matrix $W = (w_{i,j})$.
The weight $w_{i,j}$ represents the cost of traversing the
edge~$v_iv_j$ of $G(B^\prime)$ in a TSP tour.  All weights of edges
that involve the dummy vertex are set to zero.
To determine the other weights, we propose three approaches. 

\paragraph{Hamming Distance} 
First, we propose using the Hamming distance between columns~$i$
and~$j$ of~$B'$ as the weight of the edge $v_iv_j$, that is,
$w_{i,j} = \| B'_i - B'_j \|$, where $B'_i$ denotes the $i$-th column of $B'$.  The intuition behind this is that
the edge $v_iv_j$ has low weight if the assignment of set curves to
bins is similar in columns~$i$ and~$j$.

\paragraph{Upper Bound Estimation}
Second, we use an observation from the ordering algorithm of
curves to assign weights to edges by computing the upper bound of crossings if two elements are put next to each other in the order.  As explained in more detail in
\cref{sec:ordering_curves}, a crossing of two storylines is necessary
if there is an inversion of bin assignments. Such an inversion might
not have to occur immediately between two consecutive elements in
the order.  The following weights are an upper bound for the number of inversions:
\[
  w_{i,j} = \sum_{x=1}^{m}\sum_{y=x+1}^{m}
  \begin{cases}
    1 & \text{if } b_{x,i} \le b_{y,i} \text{ and } b_{x,j} > b_{y,j} \\
    0 & \text{otherwise}
  \end{cases}
\]
(Note that the symmetric case $b_{x,i} \ge b_{y,i}$ is omitted.)
Essentially, we are overcounting the occurrence of inversions and try
to place elements next to each other in the order if they avoid
inversions. Intuitively, this should reduce crossings. 
\Cref{fig:element_order} shows an example of how the upper bound estimates the crossings. 

\paragraph{Iterative Approach}
Third, we propose an iterative approach that performs multiple
computations of TSP tours, alternating with Step II (see \cref{sec:ordering_curves}).  We compute an initial element order
via the weight assignment strategy based on Hamming distance (see
above).  Next, we compute a vertical set order
$\Lambda = (\lambda_1, \dots, \lambda_n)$ by the procedure of Step II below.  Then we use this~$\Lambda$ to assign new weights to the complete graph $G(B')$.
\[
w_{i,j} = \sum_{x=1}^{m}\sum_{y=x+1}^{m} \begin{cases} 1 & \text{if } \lambda_i(x) < \lambda_i(y) \text{ and } \lambda_j(x) > \lambda_j(y) \\ 0 & \text{otherwise}\end{cases}
\]

\begin{figure}
     \centering
     \begin{subfigure}[b]{0.49\linewidth}
         \centering
         \includegraphics[width=\textwidth, page=2]{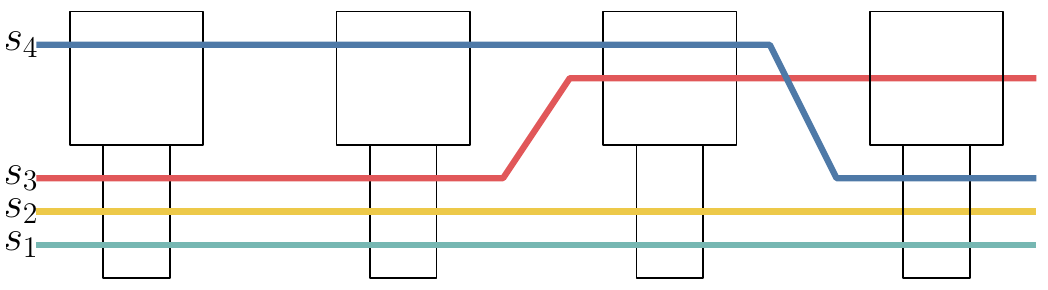}
         \caption{}
         \label{fig:crossings2}
     \end{subfigure}
     \hfill
     \begin{subfigure}[b]{0.49\linewidth}
         \centering
         \includegraphics[width=\textwidth, page=3]{figures/crossings.pdf}
         \caption{}
         \label{fig:crossings3}
     \end{subfigure}
     \caption{(a) and (b) show the same order of a subset of elements. Even though we do not know the actual order of set curves during Step I, we can estimate the upper bound by looking at the assigned bins of two sets. Depending on the order of remaining elements such crossings are either necessary or can be removed with the curve ordering algorithm.}
    \label{fig:element_order}
\end{figure}

(Note that the symmetric case $\lambda_i(x) > \lambda_i(y)$ is omitted.)
Now, we compute a new TSP tour with the new set of weights.  We repeat
the process of computing a TSP tour and assigning new weights until
the predefined number of iterations is reached or the number of crossings
does not go down anymore.
At first glance, this weight assignment strategy seems similar to the variant that computes the upper bound.
However, here, we compute the weights based on the previously assigned curve orders $\lambda_i$ instead of using the estimation the upper bound provides.
By knowing the actual order of set curves, we can place elements next to each other, which decreases the number of crossings.

After computing a (near-) optimal TSP tour
$(v_{i_1}, v_{i_2},\dots,v_{i_{n+1}})$, we use its vertex order
$\pi = (i_1, i_2,\dots,i_n)$ to define the order of the elements in our
StorySet.  Note that $v_{i_{n+1}}$ is the dummy vertex, which we
ignore.

\subsection{Step~II: Ordering the Curves}
\label{sec:ordering_curves}

In this section, we describe a procedure to compute the order of the
set curves in each bin while avoiding unnecessary crossings. 
The main idea of the algorithm is to process the set curves from left to right according to the element order computed in Step I.
We sort all curves in the same bin of the current element according to the order of the first two distinct bins into which they eventually separate.
In other words, if two set curves run together for a few steps, we put that one in a higher position in the higher bin once the two curves separate. 
Next, we will describe the algorithm in more detail and show that it cannot produce any unnecessary crossings.
The input is any element order~$\pi$ (which we do not change).
We assume that the columns of~$B$ are ordered according to~$\pi$.
The output is $\Lambda = (\lambda_1,\dots,\lambda_n)$, where
$\lambda_i \colon S \rightarrow [m]$ represents the order of the set
curves at each element.
For simplicity, we assume for now that, for $i\in[m]$, the curve for
set~$s_i$ is a polyline
$\langle \lambda_1(s_i), \dots, \lambda_n(s_i) \rangle$ that bends
only at the vertical element markers.
Thus, for $1 \le i < j \le m$, set curves $s_i$ and $s_j$ cross
between elements~$k$ and~$k+1$ if $\lambda_k(s_i)<\lambda_k(s_j)$ and
$\lambda_{k+1}(s_i)>\lambda_{k+1}(s_j)$, or vice versa.
Let $(b_{i,1},\dots,b_{i,n})$ be the \emph{bin vector} of~$s_i$.
Now, such an inversion is \emph{necessary} whenever a similar
condition holds for the bin assignment, namely if $b_{i,k} < b_{j,k}$
and $b_{i,k'} > b_{k,k'}$ (or vice versa) for some element $k'>k$.
Otherwise, the crossing is not necessary, and our algorithm will avoid
it.  See \cref{fig:crossings} for an example of a necessary crossing.

\begin{figure}
     \centering
     \includegraphics[width=\linewidth, page=1]{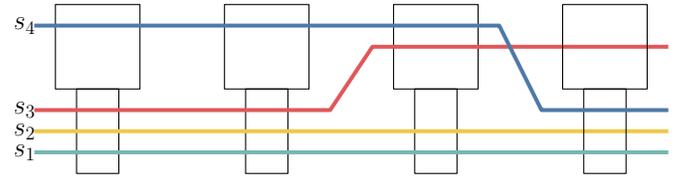}
     \caption{The case distinctions of the curve ordering algorithm. Either $s_1$ or $s_2$ needs to be kept as representative after kernelization. During lookahead, it is determined that $s_3$ must be above $s_2$ as $s_3$ switches to a different bin that is above the bin in which $s_2$ remains. The lookahead determines that initially $s_3$ must be below $s_4$. The order of $s_3$ and $s_4$ can be kept until the element in which $s_4$ switches to a bin below.}
    \label{fig:crossings}
\end{figure}

\paragraph{Algorithm}
With the above in mind, we go through the elements in the given
order~$\pi$.  For each element~$i$, we determine the order~$\lambda_i$
of the set curves.  Before that, however, we perform the following
simple kernelization.  If two or more curves have the same bin
assignment for all elements, we keep only one representative for this
subset of the curves.  This simplifies our description.  After running
our curve ordering algorithm, we insert all curves that we removed, each in
the order directly below its representative.

For the first element in~$\pi$, we perform a look-ahead to
compute~$\lambda_1$.  For each pair $s$ and $s^\prime$, we compute (in
$O(n)$ time) the first index $1 \le j \le n$ where
$b_{s,j} \not = b_{s^\prime,j}$.  If $b_{s,j} < b_{s^\prime,j}$, then
we place~$s$ below~$s^\prime$ in~$\lambda_1$, otherwise we place~$s$
above~$s^\prime$.  After we have computed all pairwise precedences, we
can simply sort the set curves, which yields~$\lambda_1$.
Computing~$\lambda_1$ in this way takes $O(nm^2)$ time.  This can be
reduced to $O(nm \log m)$ by using HeapSort and computing only the
necessary precedences on the fly.
\Cref{fig:crossings} illustrates the different cases when computing precedences between set curves. 

Then, we go through the remaining elements of~$\pi$, using a simpler
approach to determine the vertical order of the set curves.  Let $i$
be the current element.  If, for two set curves $s$ and~$s^\prime$, it
holds that $b_{s,i} \not = b_{s^\prime,i}$, we can immediately
conclude that the curve in the lower bin has to be lower in
$\lambda_i$.  Otherwise, we simply leave $s$ and $s'$ in the same
order as in $\lambda_{i-1}$.  Now sorting the curves according to
these precedences yields~$\lambda_i$ in $O(m \log m)$ time.

After processing all elements in $O(nm \log m)$ total time, we have
determined the vertical order $\Lambda=(\lambda_1,\dots,\lambda_n)$ of our StorySet.

We summarize the theoretical guarantees of our algorithm in the following theorem.

\begin{theorem}
  For a given element order, the above algorithm computes an order of
  the set curves with the minimum number of crossings.
  The algorithm runs in $O(nm \log m)$ time.
\end{theorem}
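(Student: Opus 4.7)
I would prove the theorem in two stages: correctness, via a per-pair argument, and then running time. For the lower bound, fix any two set curves $s, s'$ and consider the subsequence of element positions $k$ at which $b_{s,k} \ne b_{s',k}$. At each such position the bin assignment dictates which of the two curves must be above the other, so any valid drawing must cross $s$ and $s'$ at least once for every flip in this forced-order subsequence. Summing over all pairs gives a lower bound on the total number of crossings of any layout that respects the given element order.

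For the matching upper bound, I would first verify that the look-ahead precedence used for element~$1$ (namely, $s \prec s'$ iff the first index where $b_{s,\cdot} \ne b_{s',\cdot}$ has the $s$-bin smaller) is transitive, so that it defines a valid total order; a short case analysis on the relative positions of the first-differing indices of three curves settles this. Then, by induction on~$k$, I would show that for every pair $(s, s')$ the number of crossings introduced between them among the first $k$ elements equals the number of flips in their forced-order subsequence restricted to the first $k$ positions: at elements where $b_{s,k} = b_{s',k}$ the algorithm preserves the previous order and introduces no crossing, while at elements where the bins differ it installs the forced order, introducing a crossing exactly when that order differs from the current one. The kernelization step is lossless because curves with identical bin vectors force no crossings among themselves and are interchangeable with respect to every other curve, so re-inserting them adjacent to their representative preserves the count. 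The hard part is precisely this paragraph: arguing that the locally greedy, pairwise choices combine into a single total order that is simultaneously optimal for every pair; the transitivity of the look-ahead precedence and the tight per-pair invariant together resolve the difficulty.

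For the running time, the look-ahead at element~$1$ computes pairwise precedences by scanning until the first differing bin, which costs $O(n)$ in the worst case; using HeapSort so that only $O(m \log m)$ such comparisons are performed gives $O(nm \log m)$ overall. For each of the remaining $n-1$ elements, sorting by the bin-then-previous-order rule costs $O(m \log m)$. Summing these contributions yields the claimed $O(nm \log m)$ bound.
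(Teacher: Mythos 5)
Your proposal is correct and follows the same pairwise-necessity argument that the paper's (much terser) proof relies on: every crossing between two curves is charged to a forced flip in their bin order, the per-pair lower bound is met exactly, and the running time is obtained by summing the sorting costs over the elements. In fact you supply several details the paper only asserts --- the transitivity of the look-ahead precedence at the first element, the per-pair induction showing the bound is attained simultaneously for all pairs, and the soundness of the kernelization --- so nothing is missing.
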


\begin{proof}
  From the description of our algorithm it is clear that it changes
  the vertical order of two set curves only if necessary.  This
  immediately yields that it produces the minimum number of crossings.
  The running time follows by adding up the running times for each
  step.
\end{proof}

\section{Quantitative Evaluation}

To evaluate the impact of different variants for computing an ordering of elements of a StorySets representation, we performed several computational experiments using samples from the Multidimensional Sexual Self-Concept Questionnaire (MSSCQ)~\cite{snell1998multidimensional}. We evaluate the resulting StorySets representations by counting the number of crossings and number of turns. We also compare the runtimes of the exact and heuristic approaches for computing a TSP tour in Step 1 of the approach. 

\subsection{Datasets}
We extract synthetic datasets from the publicly  available~\footnote{\url{http://openpsychometrics.org/_rawdata/}} dataset of responses to the MSSCQ. The MSSCQ is a self-assessment questionnaire that maps 100 Likert-type questions to 20 sexual self-concepts. The MSSCQ is commonly used in research on human sexuality. 
To create the synthetic datasets for our computational experiments, we first stripped the demographic information and sampled from the remaining 100 Likert-type questions (elements) of 17685 participants (set curves). We increased the number of elements from 5 to 100 with a step size of 5 and the number of set curves from 2 to 30 with a step size of 2. We assume that the chosen values cover all realistic data sizes, thus, providing insight on the computational limitations of the approach.
For each combination of number of elements and number of set curves we randomly sampled 5 datasets leading to a total of 1500 datasets.        

\subsection{Metrics}
We computed three quality metrics for each instance. 
First, we computed the total number of crossings ($CR$) as this intuitively correlates with the visual clutter and visual complexity in a StorySets visualization. 
Second, we computed the total number of turns ($T_\Sigma$) of the combinatorial embedding. We define a turn if a set curve $s$ changes position in the local order of two consecutive elements such that $\lambda_i(s) \not = \lambda_{i+1}(s)$.
Similar to crossings, turns should correlate with the visual complexity of a StorySets visualization.
Furthermore, both metrics should also correlate with the ink ratio, as curves that turn and cross less essentially require less ink to be drawn. 
Third, we collected runtime data, measuring total time: %
the time required to compute the order of elements and the time require to compute the local order of set curves.

\subsection{Implementation and Experimental Setup}
We implemented a prototype of StorySets in python 3.9 which we also used to run the experiments. 
As mentioned earlier, our implementation uses exact and heuristic approaches to compute the set element order.  
The heuristic approach applies the simulated annealing algorithm of NetworkX~\footnote{\url{https://networkx.org/}} to compute a short TSP tour, starting with an approximation based on the Christofides algorithm~\cite{Christofides22}. 
The exact approach uses the Concorde TSP Solver~\footnote{\url{https://www.math.uwaterloo.ca/tsp/concorde.html}} with the QSopt linear programming solver~\footnote{\url{https://www.math.uwaterloo.ca/~bico/qsopt/}} module.
We implemented the ordering algorithm for set curves in python.
For all TSP variants, besides the iterative approach, we computed the order of set curves afterwards.
Additionally, for each instance we used a random order of elements with optimal set curve ordering as a baseline case.
We ran the experiments on an Ubuntu 22.04 machine with AMD Ryzen 5600x 6-core CPU and 32 Gb RAM.  
All computation was performed on a single core of the CPU.

\subsection{Results and Discussion}

\paragraph{Determining the Number of Iterations}
In the first experiment, we evaluated the impact of the number of iterations using the iterative approach.
We set the number of iterations to 10 and computed both quality metrics after each iteration. 
Note that this is only relevant for the exact approach, as the heuristics can have worse values in all quality metrics between iterations.
For each instance we computed the iteration after which no change in any quality metric is measured.
This resulted in a mean value of $2.95$ iterations with standard deviation $\sigma=2.44$. 
We then evaluated the relative change in the quality metrics after
each iteration; see \cref{fig:iterative_exp}. %
Interestingly, the relative change regarding the quality metrics is highest in the first three iterations. 
Therefore, for input datasets of similar size to the ones used in the experiment, it is not necessary to compute more than five iterations as there is nearly no gain in quality.
We used this observation and set the number of iterations to five for the remaining experiments. 

\begin{figure}[t]
     \centering
    \begin{subfigure}[b]{0.49\linewidth}
         \centering
         \includegraphics[width=\textwidth, page=1]{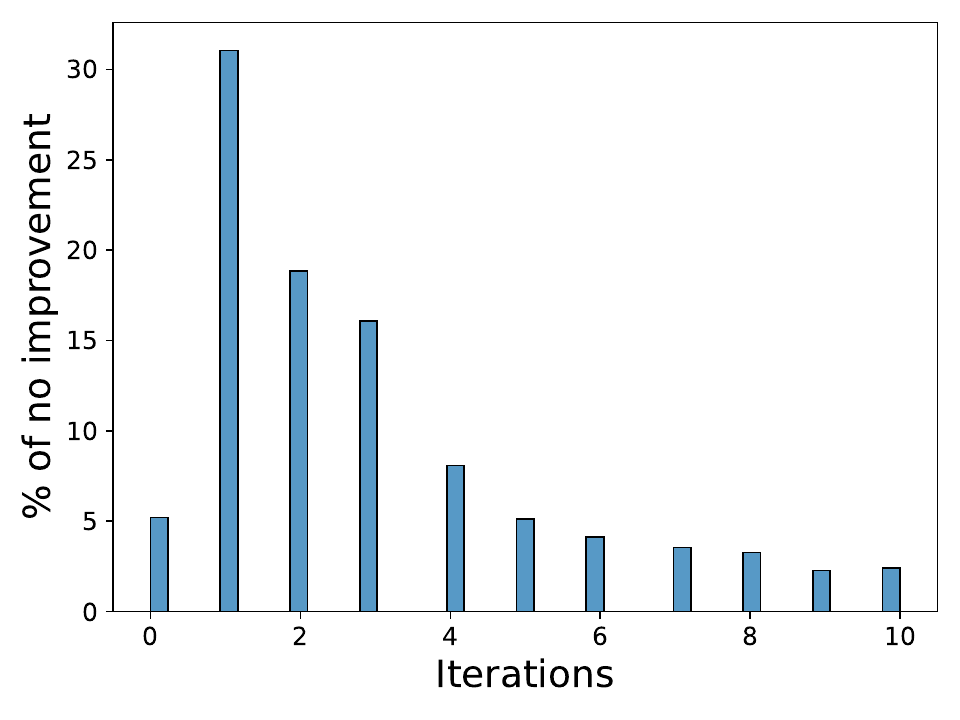}
         \caption{}
         \label{fig:iterative_exp1}
     \end{subfigure}
     \hfill
     \begin{subfigure}[b]{0.49\linewidth}
         \centering
         \includegraphics[width=\textwidth, page=1]{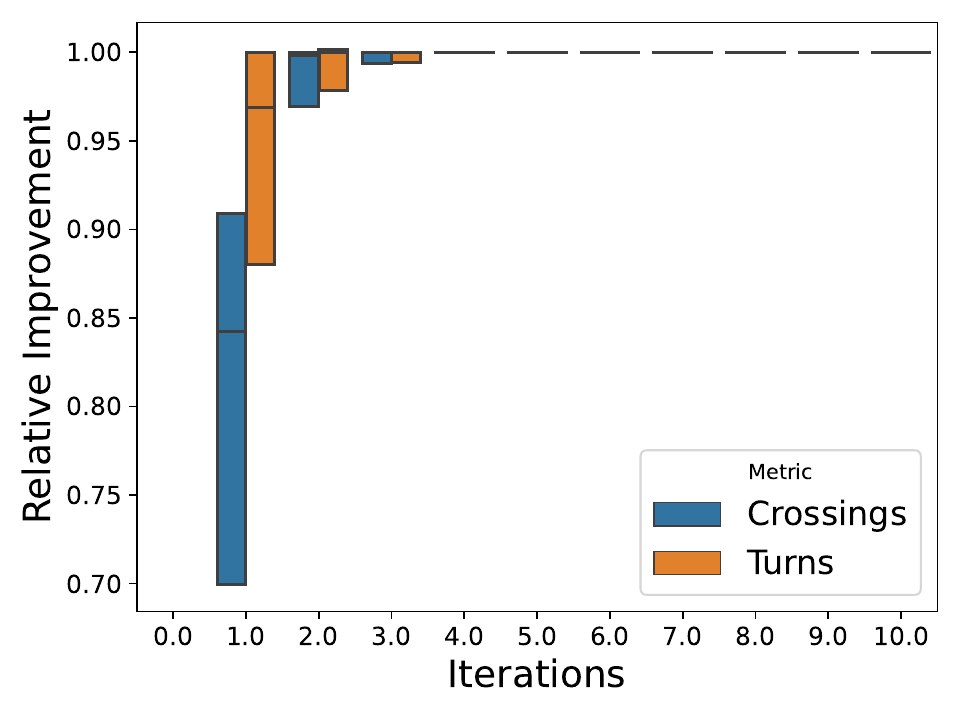}
         \caption{}
         \label{fig:iterative_exp2}
     \end{subfigure}
     \caption{Our experiments revealed that a small number of iterations is sufficient. According to~(a), most instances needed at most ten iterations before no improvement could be measured. Note that the last bin represents all instances that needed ten or more iterations. According to~(b), the first two to three iterations improved the metrics by far the most.}
    \label{fig:iterative_exp}
\end{figure}

\paragraph{Exact Approaches}
In a second experiment, we computed the quality metrics for all variants. 
For each dataset, we computed the relative number of crossings of each variant when comparing them against the baseline.
Additionally, we computed the mean of the quality metrics for all instances with equivalent sizes and variants. 
First, we compared all variants where optimal solutions for the TSP were computed.
\Cref{fig:exp_CR_opt} shows the results of the experiment for different values of $m$.
The Hamming distance as TSP weights performed worst regarding $CR$ for all combinations of instance sizes. 
However, with increasing $m$ the gap between the Hamming distance variant and all other variants shrinks.
Using the iterative approach initialized with the Hamming distance yielded a significant improvement, especially on instances with $m \le 10$.  
However, the iterative approach initialized with the Hamming distance was occasionally outperformed by the upper bound variant. especially if $m > 10$.
The iterative variant initialized with the upper bound yielded only marginal improvement.    
Thus, for $m \le 10$ and an availability of the Concorde TSP solver we suggest using the iterative approach initialized with the upper bound. For $m > 10$ the upper bound variant without additional iterations works sufficiently well.

The observed behavior regarding $T_\Sigma$ follows a similar pattern as for $CR$. All plots concerning $T_\Sigma$ can be found in the supplemental material. %

To summarize, for datasets with $m < 10$, both iterative approaches performed best regarding the quality metrics. However, for $m > 10$, using an iterative approach yields only marginal improvement over using the upper bound variant. 

\begin{figure}[t]
     \centering
         \includegraphics[width=\linewidth, page=1]{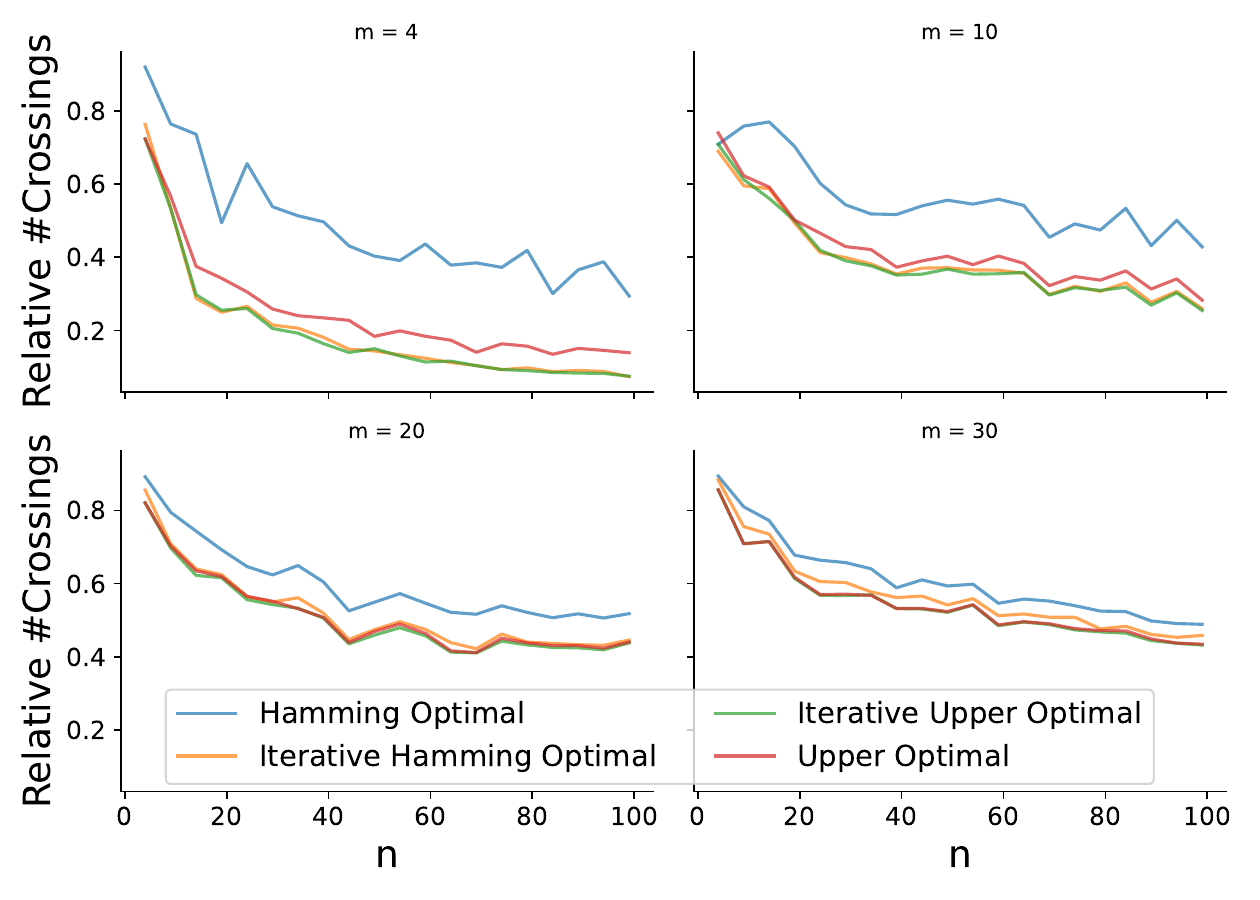}
     \caption{Comparision of number of crossings between optimal variants for different number of sets. For $m < 10$ the Hamming distance performs worse than other variants. The iterative variants occasionally outperform the upper bound variant.}
    \label{fig:exp_CR_opt}
\end{figure}

\begin{figure}[t]
     \centering
         \includegraphics[width=\linewidth, page=1]{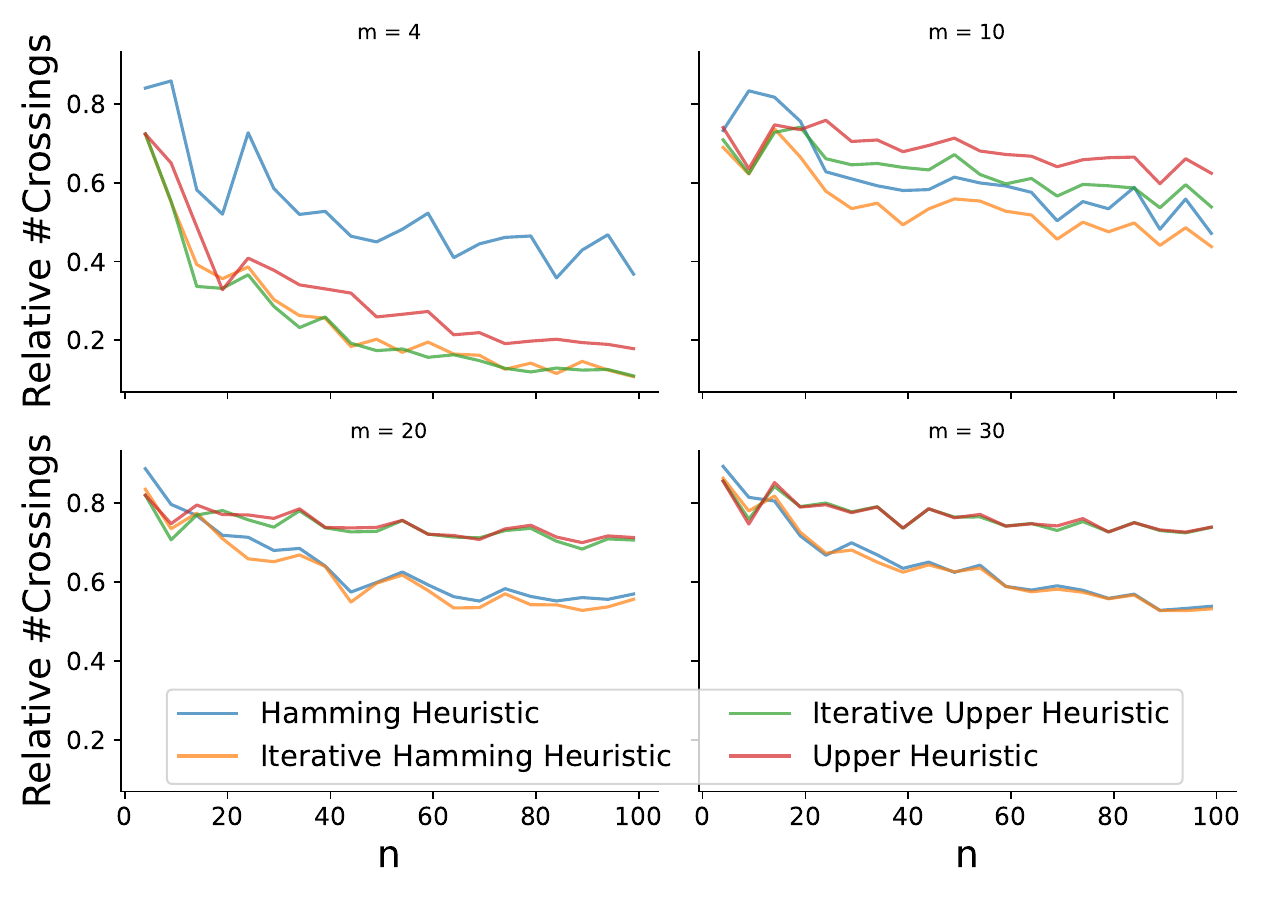}
     \caption{Comparision of number of crossings between heuristic variants for different number of sets. For $m < 10$ the Hamming distance performs worse than other variants. With increasing number of sets the other variants start to perform worse. Overall, the iterative Hamming distance variant performs consistently best.}
    \label{fig:exp_CR_heu}
\end{figure}

\paragraph{Heuristic Approaches}
We expected a similar result for the variants using heuristics to compute a TSP tour. 
\Cref{fig:exp_CR_heu} shows the experiment for different values of $m$.
However, the performance behavior is drastically different. Regarding $CR$, the iterative variant initialized with the Hamming distance performed best overall. For $m \le 6$ the Hamming distance variant performed worst, however, for $m \ge 12$ it performed second best. The upper bound variant performed slightly worse than the iterative Hamming approach for $m \le 10$; however, for $m > 10$, the performance degrades. Similarly, the iterative approach using the upper bound as initialization degrades with increasing $m$.
This leads us to believe that the heuristics struggle to find better tours with certain weight assignments and get stuck in local minima much more easily.
The observed behavior regarding $T_\Sigma$ is a similar pattern as $CR$. 

When comparing the difference in quality metrics between exact and heuristic approaches, we can report that for $n < 10 $ the heuristic approaches are able to compute a near-optimal tour. With an increasing number of elements, the gap between exact and heuristic solutions widens as seen in \cref{fig:iterative_exp_gap}. 
Furthermore, this gap is much more drastic between the different variants. 
For the Hamming distance variant, the heuristic never performs worse than $1.15$ of the optimal solution, while the iterative Hamming distance variant ($1.35$), the upper bound variant ($1.75$), and iterative upper bound variant ($1.75$) performed worse.

To summarize, the iterative approach initialized with the Hamming distance performed nearly always best for all combinations of $m$ and $n$ regarding the quality metrics. However, with an increasing number of sets, the gap closes, and the Hamming distance variant without additional iterations performed similarly. Compared to the best performance of the exact variants over all instances, both Hamming distance variants never perform worse than $1.3$ on $CR$ and $1.15$ on $T_\Sigma$. 

\begin{figure}[t]
     \centering
    \begin{subfigure}[b]{0.49\linewidth}
         \centering
         \includegraphics[width=\textwidth, page=1]{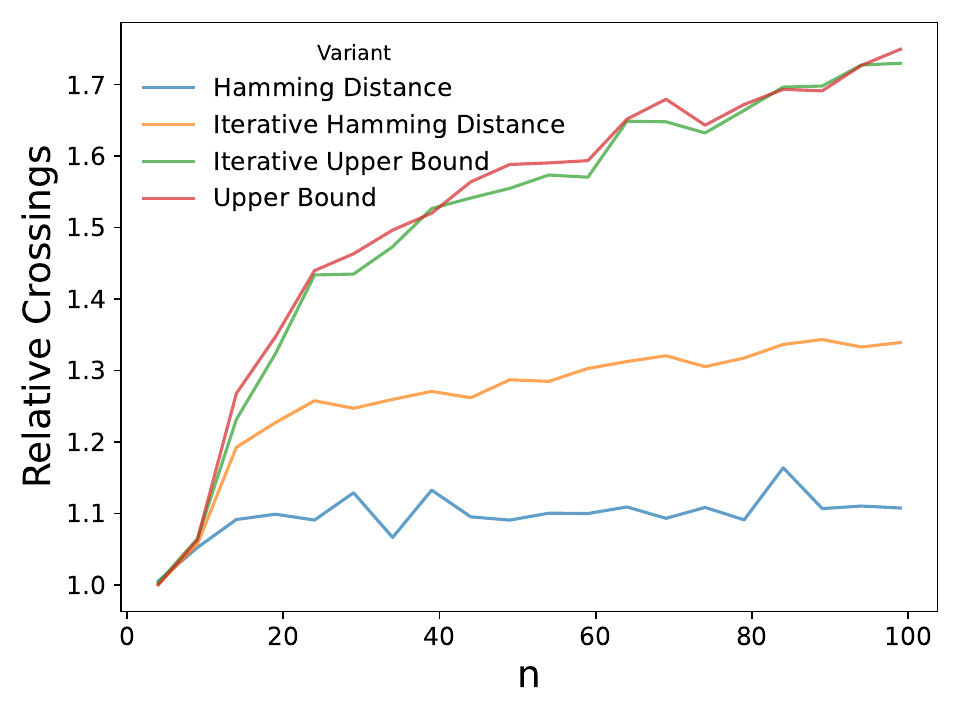}
         \caption{}
         \label{fig:iterative_exp_gap1}
     \end{subfigure}
     \hfill
     \begin{subfigure}[b]{0.49\linewidth}
         \centering
         \includegraphics[width=\textwidth, page=1]{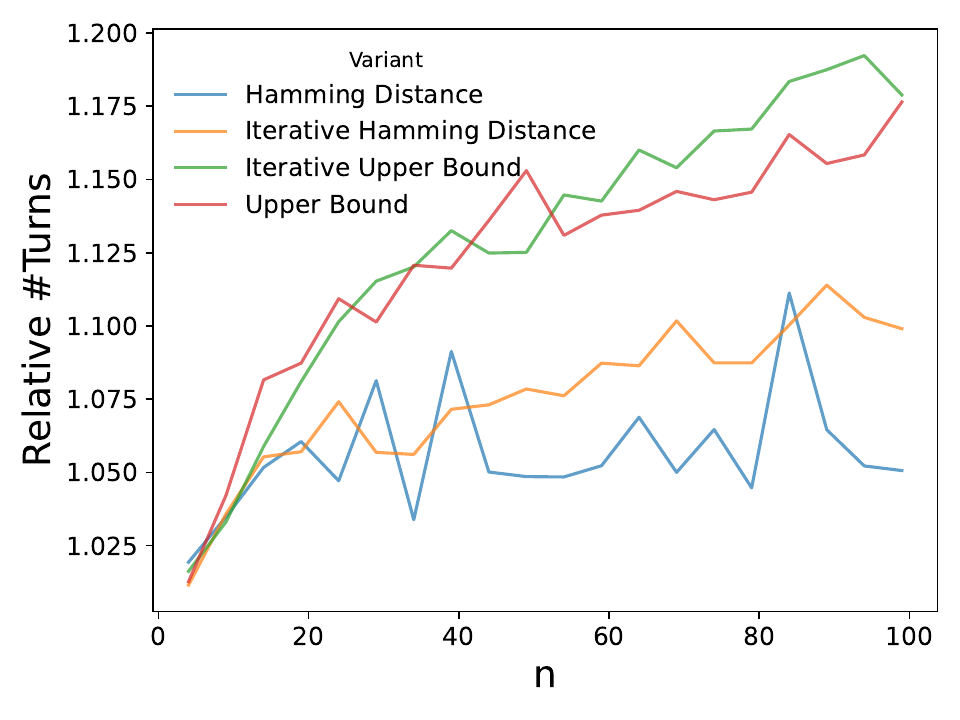}
         \caption{}
         \label{fig:iterative_exp_gap2}
     \end{subfigure}
     \caption{The relative difference of crossings (a) and turns (b) between exact and heuristic solution of the same variant. While for $n < 10$ the heuristics find a near-optimal tour, this gap widens with an increasing number of elements. Furthermore, the variance in solution quality of the heuristic approach is obvious.}
    \label{fig:iterative_exp_gap}
\end{figure}

\paragraph{Runtime}
Furthermore, our curve ordering algorithm is comparably fast in finding a TSP tour, and we confirmed that the total runtime is lopsided toward computing a TSP tour.
A comparison of the total runtimes of the different variants showed that computing the order using the Hamming distance was relatively fast, with a runtime of less than 100ms.
The variants using the upper or lower bound are slower compared to the Hamming distance variants.
Still, a StorySet visualization can be computed in less than 500ms.
The iterative approaches performed even worse but still managed to compute a StorySets visualization in less than 600ms.
Furthermore, the total runtime of the iterative variants plateaued for $n > 50$.
As $CR$ an $T_\Sigma$ does not decrease substantially as well we suspect that the iterative approaches terminate fast without further improvement.
A comparison of the runtimes between the heuristic and the exact approach shows that the efficient implementation of the exact algorithm has equal performance to the heuristic. The reason for this is mainly the efficient implementation of the Concorde TSP solver. 

In conclusion, it is feasible to compute exact solutions between $100ms$ and $500ms$ for instances that can reasonably be visualized. However, implementing the exact approach requires Concorde's efficient implementation, which might not be available in every context. For both exact and heuristic approaches, the Hamming distance had the fastest total runtime ($< 100ms$).

\section{Qualitative Evaluation}

We present two case studies to illustrate StorySets in the context of real-world data.  
The first case study shows how StorySets is applied to an uncertain set system with elements representing characters from ``The Simpsons'' with uncertain character traits as the sets themselves. The second case study focuses on visualizing the results of a recent study that evaluated the impact of Covid-19 lockdowns on mental health. Here, set elements represent character traits, and participants are set curves. 

\begin{figure*}[t]
  \centering
  \begin{subfigure}{0.67\linewidth}
      \includegraphics[width=\textwidth]{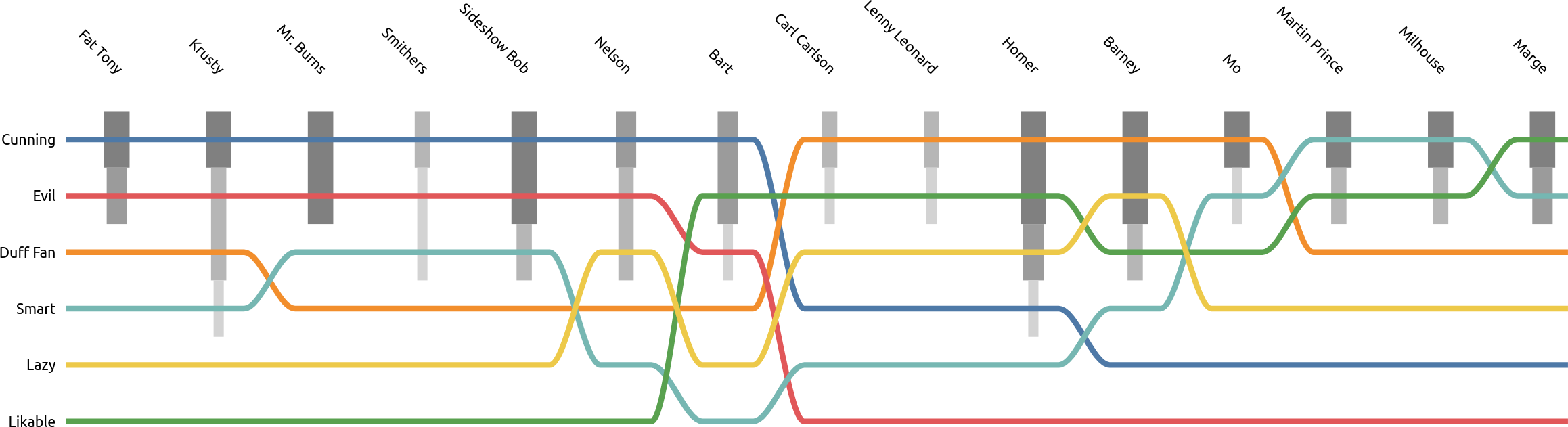}      
      \caption{}
      \label{fig:simpsons_1}
  \end{subfigure}
  \hfill
  \begin{subfigure}{0.32\linewidth}
      \includegraphics[width=\textwidth]{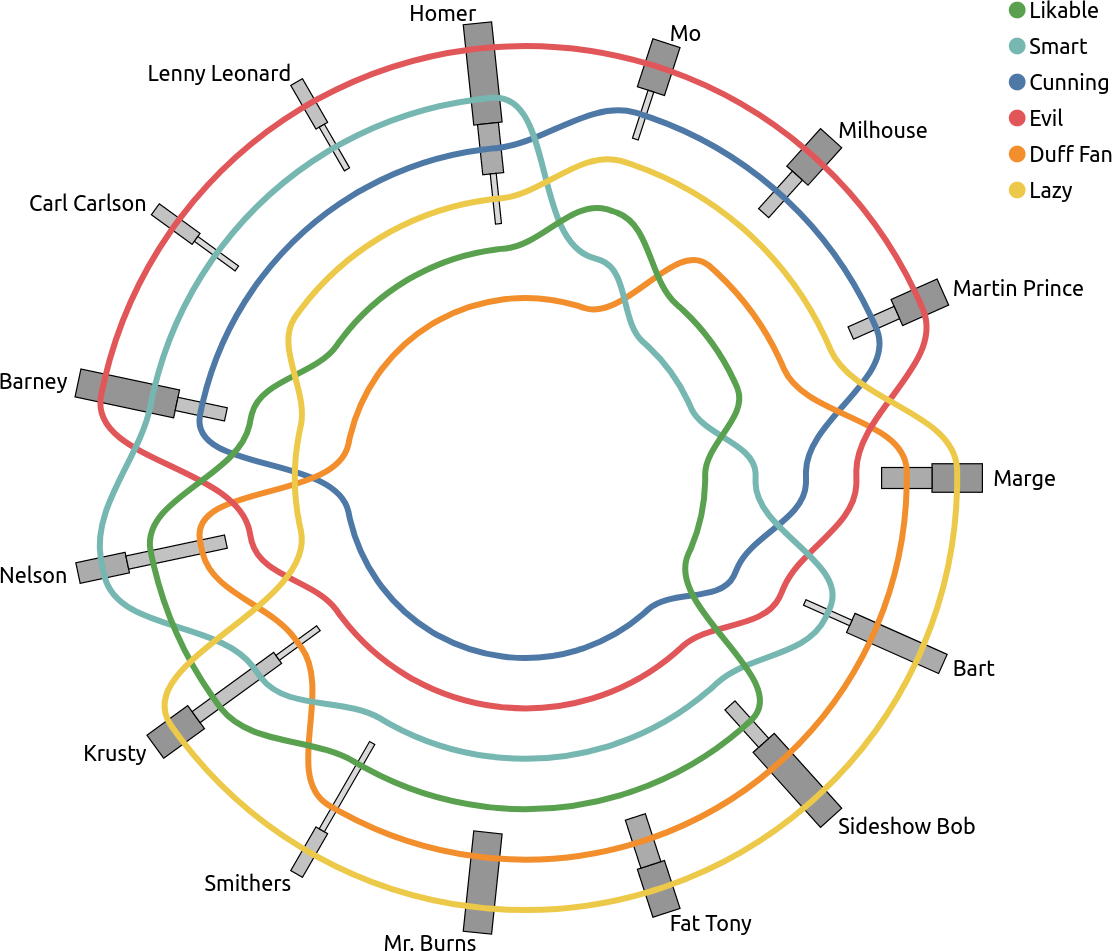}      
      \caption{}
      \label{fig:simpsons_2}
  \end{subfigure}   
  \caption{Traits of fictional characters of the tv show ``The Simpsons''. In (a) the dataset is represented with the storyline variant of StorySets. In (b) the representation uses the star variant of StorySets. In both variants it is easy to follow curves and determine set membership while membership certainty can be determined by the color and thickness of the glyphs. Furthermore, if a curve is always below (or inside) a different curve, this represents easily interpretable set containment.}
  \label{fig:usecase_simpsons}
\end{figure*}

\subsection{Simpsons Character Traits}\label{sec:use_case1}
The dataset used in this case study represents 15 main characters from the TV show ``The Simpsons'' as set elements. 
The six sets themselves represent specific character traits. 
Our dataset is based on the Open-Source Psychometrics Project~\footnote{\url{https://openpsychometrics.org/tests/characters/}} ``Which Character are You?'' quiz. 
To create the quiz, the authors asked volunteers to rate fictional characters on a scale of 0-100, where both ends of the scale were anchored with two opposing adjectives. 
Naturally, volunteers judging characters often had contradicting opinions on characters having specific traits.
Based on this observation, we handcrafted a small dataset and visualized it using the StorySets approach. 
We created two versions that highlight different variants of our approach.
\Cref{fig:simpsons_1} shows a storyline variant representation of StorySets where bin height was determined by the given local distribution in the data. 
In \cref{fig:simpsons_2} a star variant is presented.
Both variants use the iterative upper bound approach to compute the element order.
We assign a random color from the Tableau10 color scheme to set curves. 
We double encode membership certainty with the bin width and the bin color. 
Bin width is determined by scaling the width between a minimum and maximum value according to the certainty of the bin.
The color of a bin is assigned by a perceptually uniform sequential color scale, with hue corresponding to certainty.

As mentioned in the \cref{sec:uncertainty_related}, specific abstract tasks for uncertain set systems have not been defined yet. 
Thus, for our case study, we use the abstract task taxonomy of Alsallakh et al.~\cite{alsallakh2016}. 
In both examples, it is fairly straightforward to determine the set membership of an element by first locating the element and checking if and in what bin a set curve passes through the element (Design Requirement R1). 
The presence of bins also makes it easy to distinguish membership certainty between two curves passing through an element (Design Requirement R2). 
Determining which set a curve represents can easily be achieved by a lookup at the start of the curves or the legend.
For example, the character ``Sideshow Bob'' is a member of three sets, namely ``Cunning'', ``Evil'' and ``Smart''. 
While ``Cunning'' and ``Evil'' have similar certainty, ``Smart'' is less certain.      

Similarly, it is easy to count the number of sets in the system and to follow one set curve to determine which elements (or how many elements) belong to it.
For example, if we follow the orange curve representing ``Duff Fan'' we can determine that ``Lenny'', ``Carl'', ``Homer'', ``Mo'', and ``Krusty'' are members. 
Also, by comparing the color and width of the bins when following a curve, it is fairly easy to consider only elements of specific certainty.

It is also possible to analyze set-to-set relationships with a StorySets visualization. 
Set containment can be analyzed by checking if one curve is strictly below the other (or inside for the star variant). 
For example, ``Evil'' is contained in ``Cunning'' as the red line is always below the blue line. 
In the star variant, the red line is completely inside the blue line. 
We think that one of the advantages of the star variant is that checking containment relationships is similar to Euler diagrams.

To test whether two sets intersect, we can follow one of the two set curves and check whether there is an element intersected by both curves.
While it is straightforward to check simple set intersections (and their respective certainty in specific elements), it is hard to perform other more complicated analysis tasks.
In general, pair-wise intersections might be visually far apart, thus increasing the difficulty of this task when performed on a StorySets visualization. 
For example, determining the largest pairwise intersection entails following two lines while performing a pairwise comparison.

\subsection{Covid-19 and Character Strengths} 
During the Covid-19 pandemic, people were required to stay home and self-isolate, which negatively impacted mental health for many. 
Recently, a study~\cite{casali2021andra} was published that evaluated the character traits of participants and their protective role regarding mental health sustenance. The study collected responses of 944 Italian participants. 
Besides demographic information, the responses contained 120 Likert scale questions (1: ``not at all like me'' to 5: ``very much like me'') that mapped to 24 character traits. 
For example, the question ``I always complete what I begin'' mapped to perseverance.
In total, participants could have between 5 and 25 points for each character trait.
The higher the sum of points that mapped to a certain character trait, the higher a participant associates with a respective trait.
We model the 24 character traits as elements in the StorySets visualization.
A curve shows each participant's responses. 
To create bins, we first scale each character trait to the range $[0, 1]$ and assume our five bins have upper and lower bounds of $[0.0, 0.2, 0.4, 0.6, 0.8. 1.0]$.

In a typical use-case scenario, an analyst would filter out data points that are not interesting. 
Therefore, we selected all participants that had a high response value for depression.
As seen in \cref{fig:usecase_covid} we visualized this subset of participants as two StorySets visualizations in the storyline variant and a parallel coordinate plot visualization.

For both StorySets variants, we compute the exact order of elements using the iterative Hamming distance approach.
The encoding of uncertainty is similar to the previously presented case study.
For the parallel coordinate plot, we assumed that each axis has a range of $[0,1]$ and used the same order of elements as in the two StorySets examples.
For all three examples, we randomly assigned colors from the Tableau10 color palette to each set. Furthermore, all three plots have the same height.

\begin{figure*}[t]
  \centering
  \begin{subfigure}{0.85 \linewidth}
      \includegraphics[width=\textwidth]{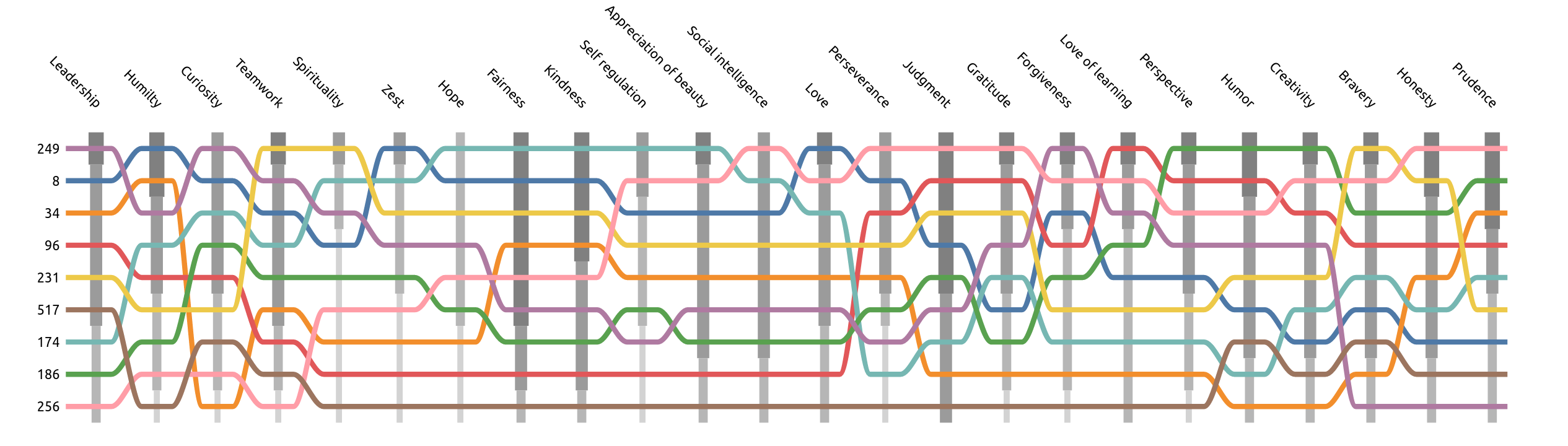}      
      \caption{}
      \label{fig:covid_1}
  \end{subfigure}
  \begin{subfigure}{0.85\linewidth}
      \includegraphics[width=\textwidth]{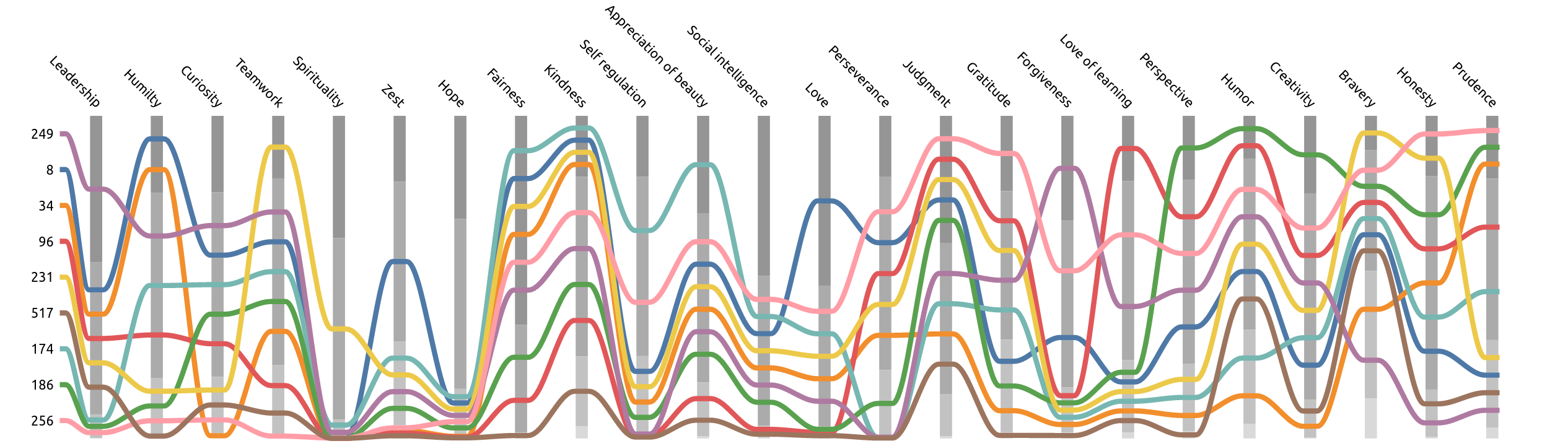}      
      \caption{}
      \label{fig:covid_2}
  \end{subfigure}
  
  \begin{subfigure}{0.85\linewidth}
      \includegraphics[width=\textwidth]{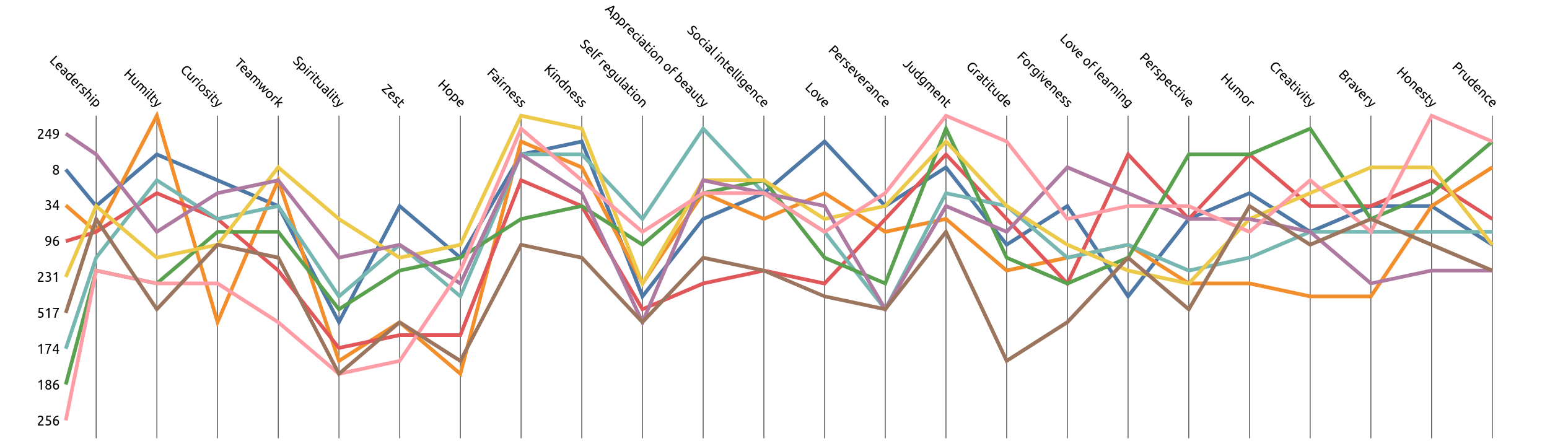}      
      \caption{}
      \label{fig:covid_3}
  \end{subfigure}
    
  \caption{Recently, a study~\cite{casali2021andra} evaluated the protective role of character traits (elements) of 944 participants (sets) during the first Covid-19 lockdown. Here, we show 10 participants with high levels of depression. In (a) a Storyline representation where bin height is scaled by the number of curves passing through. This makes following lines comparable easy and avoids overplotting similar to the parallel coordinate plot in (c). In (b) the bin height is scaled by the global distribution which uncovers the pattern that all shown participants have, compared to the general population, low values in ``Hope'' and ``Spirituality''.}
  \label{fig:usecase_covid}
\end{figure*}

An immediate observation is that the parallel coordinate plot (\cref{fig:covid_3}) creates overplotting of set curves. While it is still possible to determine the approximate coordinate of a curve on an axis, it is hard to read the actual value. This effect can even be worse if two curves have the same coordinate on subsequent axes, making it impossible to trace individual curves.
The StorySets (\cref{fig:covid_1}) approach avoids this problem and it is possible to read each individual value and trace all curves (Design Requirement R3).
The value of each curve can be determined by interpreting the color and width of a bin. 
The above-mentioned observation becomes especially noticeable between elements ``Fairness'' and ``Self-regulation''.
Here, it is difficult to distinguish between the curves in the parallel coordinate plot. 
The reasons are that the coordinates on both axes are very similar, and simultaneous contrast makes distinguishing between different colors hard.

A finding from the study was that a low value of ``Hope'', ``Spirituality'' and ``Zest'' negatively correlates with depression. While this pattern is visible in the parallel coordinate plot, it seems less clear in the StorySets approach, as this observation is encoded in the certainty of bins instead of the position of curves.
However, in \cref{fig:covid_2} we present an alternative variant of the StorySets approach, where we scale the bin height by the global distribution of the entire dataset (as discussed in the design space study). 
By scaling the bins to the global distribution, we can now see the pattern -- participants with high values for depression are on the lower spectrum when compared to the full dataset.
This is not immediately obvious in the parallel coordinate plot, where one would need to plot all curves to make this observation.
Furthermore, even though we scale the bin height to fit the global distribution, the combinatorial embedding of the StorySets visualization does not change. Thus, no additional crossings are introduced.

\section{Discussion and Limitations}

As our focus in this paper was on exploring the design space of uncertain set visualizations and on providing algorithms that support the proposed new method, it is difficult to also incorporate a human-subjects study. Such a study could evaluate whether StorySets can be used for uncertain set visualization, but this remains as future work. It would also be worthwhile to consider the impact of design space choices (e.g., uniform/data-driven bin size, element glyph representation, set curve rendering, etc.) on task performance.

Similarly, it is unclear how well the metrics from the computational experiment translate to task performance, as minimizing crossings might not be the correct choice to determine the optimal element order. 

StorySets considers only one type of uncertainty in set systems: element-set membership uncertainty. Generalizing the proposed approach to handle other types of uncertainties in set visualization remains for future work.

Intuitively, StorySets does not easily scale to a large number sets, as one of the design principles is to clearly show every set without overlaps and overplotting. 
The storyline layout of StorySets can likely handle many elements, but the star layout is clearly limited in that respect. One plausible strategy for dealing with large numbers of elements and sets is to provide summarization features for an overview and interactivity for details.

Similarly, with an increasing number of sets, it becomes harder to differentiate between different curves. 
Using only color works for a small number of sets before reaching the limitations of the human visual system. 
However, using different line styles might improve the number of sets that can be visualized at the same time.
Still, it is unlikely to reasonably visualize more than 10--20 sets. 

Finally, we barely touched on integrating interactivity in the StorySets approach. For example, highlighting a set could be used to show other intersecting sets or bring the bins the curve passes through into the focus. As we have shown in the computational experiment, computing a StorySets representation is fast. Thus, it is feasible to integrate filter-on-demand or dynamic reordering functionality.

\section{Conclusions}

We presented a design space exploration for uncertainty in set visualization and described StorySets. To the best of our knowledge, StorySets is the first method for visualizing uncertain set systems. Underlying the proposed methods is an exact algorithm for optimally ordering set-curves within each element's bins, which avoids overplotting, thus, tracing a set curve is easy. Furthermore, it easy to see set containment in both the storyline and star variants. 
Our computational experiments showed that computing an optimal order of elements and set curves is feasible in near-realtime making it possible to integrate StorySets in an interactive application. 
We also showed that StorySets can  be used to effectively visualize
multi-dimensional discrete data, avoiding the overplotting inherent in parallel coordinates.

\bibliographystyle{abbrv-doi-hyperref}

\bibliography{storysets}

\end{document}